\title{Stanley-Reisner resolution of constant weight linear codes\thanks{The original publication is available at http://link.springer.com/article/10.1007/s10623-012-9767-2}}
\author{Trygve Johnsen\thanks{ Dept. of Mathematics, University of Troms{\o}, N-9037 Troms{\o}, Norway, \texttt{Trygve.Johnsen@uit.no}} \and Hugues Verdure\thanks{ Dept. of Mathematics, University of Troms{\o}, N-9037 Troms{\o}, Norway, \texttt{Hugues.Verdure@uit.no}} }
\newcommand{\Fq}{\mathbb{F}_q}
\newcommand{\N}{\mathbb{N}}
\newcommand{\K}{\mathbb{K}}
\newtheorem{remark}{Remark}[section]
\newtheorem{lemma}{Lemma}[section]
\newtheorem{corollary}{Corollary}[section]
\newtheorem{proposition}{Proposition}[section]
\newtheorem{theorem}{Theorem}[section]
\newtheorem{example}{Example}[section]
\newenvironment{proof}[1][Proof]{\begin{trivlist}
\item[\hskip \labelsep {\bfseries #1}]}{\end{trivlist}}
\newcommand{\qed}{\nobreak \ifvmode \relax \else
      \ifdim\lastskip<1.5em \hskip-\lastskip
      \hskip1.5em plus0em minus0.5em \fi \nobreak
      \vrule height0.75em width0.5em depth0.25em\fi}
\begin{document}

\maketitle

\begin{abstract}
\noindent 
Given a constant weight linear code, we investigate its weight hierarchy and the Stanley-Reisner resolution of its associated matroid regarded as a simplicial complex. We also exhibit conditions on the higher weights sufficient to conclude that the code is of constant weight.\\
\noindent
Keywords: constant weight, linear code, Stanley-Reisner resolution, Betti numbers\\
\noindent
2010 Mathematics Subject Classification. 94B05 (05E45, 05B35, 13F55)
\end{abstract}

%
%

\section{Introduction and notation}

In~\cite{Liu1} one found the hierarchy of higher Hamming weights 
for constant weight linear codes over a finite field $\Fq$, and one also found some sufficient conditions to conclude that a linear code is of constant weight (if it is of constant weight of some "higher order"). In the present paper we will first give some other sufficient conditions. Then we will proceed to give more refined information about constant weight  codes by studying the associated matroids derived from parity check matrices. From a more abstract perspective, since~\cite{Eagon1}, it is well known that if one regards a matroid as a simplicial complex (using independent sets as faces), then the Stanley-Reisner ideal of its Alexander dual has a pure, linear $\N$-graded resolution. Furthermore it is clear from Corollary 4.4 of~\cite{Johnsen1} that the Stanley-Reisner ideal of a matroid itself has a pure, linear resolution  if and only its restriction to its set of non-isthmuses  is uniform.
  Here we will exhibit finite matroids (those derived from constant weight codes), 
which themselves have pure $\N$-graded resolution of their Stanley-Reisner rings, but being far from linear. Before giving more details (at the end of this section) about our results  we will explain our notation and concepts. 

Let $\Fq$ be a finite field. A linear $q$-ary code $C$ is a linear subspace of $\Fq^n$ for some $n \in \N$. We denote by $k$ the dimension of the code as a vector space over $\Fq$. A codeword $c$ is an element of the code, and a subcode is a linear subspace of $C$. We denote by $\mathcal{C}_i(C)$ the set of subcodes of dimension $i$ of $C$. Let $c=(c_1,\cdots,c_n)$  be a codeword and $x \in \{1,\cdots,n\}$. We will sometimes write $c|_x$ for $c_x$. Let $D \subset C$ a subcode. Its support is \[Supp(D)=\left\{ x \in \{1,\cdots,n\},\ \exists d \in D,\ d|_x \neq 0\right\}\] and its weight is \[w(D)=\#Supp(D).\] The weight of a codeword is the weight of the subcode generated by it. The minimum (Hamming) distance $d$ of a code is the minimal weight of its non-zero codewords or equivalently of its $1$-dimensional subcodes.  A $[n,k,d]_q$ code is a linear $q$-ary code in $\Fq^n$ of dimension $k$ and minimum distance $d$. \\
In~\cite{Helleseth1}, one generalizes the minimum distance to subcodes of higher dimension, namely, for $1 \leqslant i \leqslant k$, \[d_i=min\left\{w(D),\ D \in \mathcal{C}_i(C)\right\}.\] In particular, $d_1=d$.\\
For our purpose, a code can be given in two equivalent ways: either by a generator matrix or a parity check matrix. A generator matrix $G_C$ of the code $C$ is a $k \times n$ matrix whose row space is $C$. A parity check matrix $H_C$ of the code $C$ is a $(n-k) \times n$ matrix such that \[c=\begin{bmatrix}c_1 & \cdots & c_n\end{bmatrix} \in C \Leftrightarrow cH_C^t = \begin{bmatrix}0 & \cdots & 0\end{bmatrix}.\] Such matrices are not unique for a given code.\\
A constant weight code is a code whose non-zero codewords have the same weight $d$.\\

We refer to~\cite{1Oxley} for the theory of matroids. A matroid $\Delta$ on the set $E=\{1,\cdots,n\}$ can be characterized by many equivalent definitions. We give one here: a matroid is defined by its set $\mathcal{B} \subset 2^E$ of bases satisfying the following properties: \begin{itemize} \item $\mathcal{B} \neq \emptyset$, \item $\forall B_1,B_2 \in \mathcal{B}, \forall x \in B_2-B_1,\ \exists y \in B_1-B_2 \textrm{ such that } B_2-\{x\} \cup \{y\} \in \mathcal{B}$. \end{itemize} An independent set is a subset of a basis, and a circuit is a minimal dependent set. For any subset $\sigma \subset E$, the rank of $\sigma$ is \[rank(\sigma) = max\{\#(B \cap \sigma),\ B \in \mathcal{B}\},\] and for any $1 \leqslant i \leqslant n- rank(E)$, define the higher weights of the matroid by \[d_i = min\{\#\sigma,\ \#\sigma-rank(\sigma) = i\}.\]

Let $C$ be a $[n,k,d]_q$ code given by a parity check matrix $H_C$. We can define a matroid $\Delta(H_C)$ in the following way: its ground set is $E=\{1,\cdots,n\}$ (the indices of the columns of $H_C$) and its set $\mathcal{B}$ of bases is \[\mathcal{B} = \left\{ \begin{array}{l}\sigma \subset E,\ \sigma \textrm{ maximal such that the columns of }H_C\\\textrm{ labelled by }\sigma\textrm{ are linearly independant}\end{array}\right\}.\] It can be shown that (for example in~\cite{Martin1}): \begin{itemize} \item two different parity check matrices give the same matroid,
 \item the rank of the matroid is $n-k$, \item the two sets of $d_i$ defined in this section coincide (those for the code $C$ and those for the matroid $\Delta(H_C)$).\end{itemize}

A simplicial complex $\Delta$ on the finite ground set $E$ is a subset of $2^E$ closed under taking subsets. We refer to~\cite{1Miller} for a brief introduction of the theory of simplicial complexes, and we follow their notation. A matroid is in a natural way a simplicial complex through its set of independent sets. Given a simplicial complex $\Delta$ on the ground set $E$, define its Stanley-Reisner ideal and ring in the following way: let $\K$ be a field and let $S=\K[\mathbf{x}]$ be the polynomial ring over $\K$ in $\#E$ indeterminates $\mathbf{x} = \{x_e,\ e \in E\}$. Then the Stanley-Reisner ideal $I_\Delta$ of $\Delta$ is   \[I_\Delta = <\mathbf{x}^\sigma,\ \sigma \not \in \Delta>\] and its Stanley-Reisner ring is $R_\Delta=S/I_\Delta$. This ring has a minimal free resolution as a $\N^{E}$-graded module \begin{equation} \label{SR} 0 \longleftarrow R_\Delta  \overset{\partial_0}{\longleftarrow} P_0  \overset{\partial_1}{\longleftarrow} P_1 \longleftarrow \cdots \overset{\partial_l}{\longleftarrow} P_l\longleftarrow 0 \end{equation} where each $P_i$ is of the form \[P_i = \bigoplus_{\alpha \in \N^{E}}S(-\alpha)^{\beta_{i,\alpha}}\] and $S(-\alpha)$ is the free module generated in degree $\alpha$, that is $S(-\alpha) \cong <\mathbf{x}^\alpha>$ as $\N^E$-graded modules. Here, $P_0=S$. 
The $\beta_{i,\alpha}$ are called the $\N^{E}$-graded Betti numbers of $\Delta$. We have $\beta_{i,\alpha}=0$ if $\alpha \in \N^E - \{0,1\}^E$. The Betti numbers are independent of the choice of the minimal free resolution, and for matroids, are also independent of the chosen field $\K$ (\cite{Bjorner1}). We can also look at $R_\Delta$ as a $\N$-graded module or an ungraded module. The $\N$-graded and ungraded Betti numbers of $\Delta$ are then respectively the \[\beta_{i,d} = \sum_{|\alpha| = d}\beta_{i,\alpha}\] and the \[\beta_i = \sum_d \beta_{i,d}.\] 

A code $C$ gives rise to a matroid, and in turn to a simplicial complex. We shall refer to the Stanley-Reisner ring of the code as $R(C)=R_{\Delta(H_C)}$.\\

We illustrate this by an example:
\begin{example}\label{exa5}
Let $C$ be the $[4,2,2]_2$ code defined by the generator matrix \[\begin{bmatrix} 1&1&0&0\\0&1&1&1\end{bmatrix}.\] A parity check matrix is \[\begin{bmatrix} 1&1&0&1\\0&0&1&1\end{bmatrix}.\] The set of bases of the associated matroid is \[\{\{1,3\},\{1,4\},\{2,3\},\{2,4\},\{3,4\}\},\] and the set of circuits (which in this case corresponds to the set of supports of non-zero codewords) is \[\{\{1,2\},\{1,3,4\},\{2,3,4\}\}.\] The Stanley-Reisner ring is therefore \[R(C) = \K[x_1,x_2,x_3,x_4]/<x_1x_2,x_1x_3x_4,x_2x_3x_4>.\]
A minimal free resolution of this ring is given by \[\begin{xymatrix}{0 & \ar[l] R(C) & S \ar[l] &&&\ar[lll]_{\tiny{\begin{bmatrix}x_1x_2 & x_1x_3x_4 & x_2x_3x_4\end{bmatrix}}} S^3 &&\ar[ll]_{\tiny{\begin{bmatrix}x_3x_4 &x_3x_4 \\ -x_2 & 0 \\0& -x_1\end{bmatrix}}} S^2 & \ar[l] 0}\end{xymatrix}.\] Using twists, we can rewrite it as  \[\begin{xymatrix}{0 & \ar[l] R(C) & S \ar[l] &\ar[l]_-{\tiny{\begin{bmatrix}1 & 1 & 1\end{bmatrix}}} S(-\{1,2\}) \oplus S(-\{1,3,4\}) \oplus S(-\{2,3,4\}) \\ \\&&0 \ar[r] &\ar[uu]_-{\tiny{\begin{bmatrix}1 &1 \\ -1 & 0 \\0& -1\end{bmatrix}}} S(-\{1,2,3,4\})^2 }\end{xymatrix}\] In the sequel, we will omit the maps between the modules. Note that, while the Betti numbers are unique, the maps are not. The $\N$-graded and ungraded Stanley-Reisner resolution of its associated matroid are \[0 \leftarrow  R(C) \leftarrow S \leftarrow S(-2) \oplus S(-3)^2 \leftarrow S(-4)^2 \leftarrow 0\] and \[0 \leftarrow R(C) \leftarrow S \leftarrow S^3 \leftarrow S^2 \leftarrow 0.\]
\end{example}

Our results are as follows. We start in Section~\ref{sec1} by giving two straightforward  statements (Proposition~\ref{prop1} and Corollary~\ref{cor2}) which enable us to conclude that a code is of constant weight using different assumptions than those in~\cite{Liu1}. 
In Section~\ref{sec3} we prove the main result of the paper, Theorem~\ref{maincor}, which partly builds on, and partly generalizes the result from~\cite{Liu1}. We determine the $\N$-graded Betti numbers of the Stanley-Reisner rings associated to the underlying matroid structures of constant weight codes. As shown in~\cite{Johnsen1}, we can derive the weight hierarchy of the code from its $\N$-graded Stanley-Reisner resolution. In particular we find that for constant weight codes these rings have pure (but not linear) resolutions. We also find a converse: Codes whose associated ring are of the given form are constant weight codes; in particular it is enough to find the first Betti number. At the end we show that the converse doesn't hold if we restrict ourselves to ungraded resolutions.

%
%

\section{The weight hierarchy of a constant weight linear code}\label{sec1}

 The weight hierarchies of constant weight codes were found in~\cite{Liu1}. There one proves the results by investigating value functions, and apply their results to a specific such value function. We will just restate it here, and refer to~\cite{Liu1}. Afterwards, we will give a converse, namely that a code with a given weight hierarchy is of constant weight. 

\begin{theorem}[{\cite[Theorem 1]{Liu1}}]\label{tdi} Let $C$ be a $k$-dimensional linear code over $\Fq$. Let $1 \leqslant s \leqslant k-1$. Suppose that all the $s$-dimensional linear subcodes of $C$ have the same weight $d_s$. Then for every $0\leqslant t \leqslant k$, and every linear subcode $D_t$ of dimension $t$ of $C$, we have \[w(D_t) = d_t = d_s\frac{ q^k-q^{k-t}}{q^k-q^{k-s}}.\]
\end{theorem}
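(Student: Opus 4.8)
The plan is to prove the identity by elementary double counting, so as to be self-contained and independent of the value-function argument of~\cite{Liu1}. The single combinatorial input is the following: for a subcode $D$ of $C$ of dimension $m$, summing the quantity $\#\{d\in D:\ d|_x\neq 0\}$ over $x\in\{1,\dots,n\}$ in two ways gives
\[\sum_{d\in D}w(d)=q^{m-1}(q-1)\,w(D),\]
since $\{d\in D:\ d|_x=0\}$ is a hyperplane of $D$ exactly when $x\in Supp(D)$ and is all of $D$ otherwise. Throughout I write $\binom{a}{b}_q$ for the Gaussian binomial coefficient, i.e. the number of $b$-dimensional $\Fq$-subspaces of an $a$-dimensional $\Fq$-vector space, and I use the standard identities $\binom{a}{b}_q=\binom{a-1}{b}_q+q^{a-b}\binom{a-1}{b-1}_q$ and $\binom{a}{b}_q=\frac{q^a-1}{q^b-1}\binom{a-1}{b-1}_q$.

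First I would treat the range $s\leqslant t\leqslant k$. Fix a subcode $D_t\in\mathcal{C}_t(C)$. Counting over all pairs $(E,e)$ with $E\in\mathcal{C}_s(D_t)$ and $e\in E$ the weight $w(e)$, and using that each nonzero $e\in D_t$ lies in exactly $\binom{t-1}{s-1}_q$ members of $\mathcal{C}_s(D_t)$, the basic identity applied once to each $E$ and once to $D_t$ yields
\[\sum_{E\in\mathcal{C}_s(D_t)}w(E)=q^{t-s}\binom{t-1}{s-1}_q\,w(D_t).\]
By hypothesis every $E\in\mathcal{C}_s(D_t)\subseteq\mathcal{C}_s(C)$ has weight $d_s$, and $\#\mathcal{C}_s(D_t)=\binom{t}{s}_q$, so one solves for $w(D_t)$; the ratio $\binom{t}{s}_q\big/\!\left(q^{t-s}\binom{t-1}{s-1}_q\right)$ collapses to $(q^t-1)/(q^t-q^{t-s})$, and multiplying numerator and denominator by $q^{k-t}$ gives the asserted value $d_s(q^k-q^{k-t})/(q^k-q^{k-s})$. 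In particular $w(D_t)$ depends only on $t$, so $d_t$ is well defined on all of $\mathcal{C}_t(C)$; taking $t=k$ records $d_k=d_s(q^k-1)/(q^k-q^{k-s})$.

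Next, for $0\leqslant t<s$, fix $D_t\in\mathcal{C}_t(C)$ and run the count instead over the set of $E$ with $D_t\subseteq E\subseteq C$ and $\dim E=s$; there are $\binom{k-t}{s-t}_q$ of these and each has weight $d_s$. For a position $x$: if $x\in Supp(D_t)$ then $x\in Supp(E)$ for all such $E$; if $x\in Supp(C)\setminus Supp(D_t)$ then $D_t\subseteq C_x$ with $\dim C_x=k-1$ and the number of such $E$ with $x\in Supp(E)$ is $\binom{k-t}{s-t}_q-\binom{k-1-t}{s-t}_q=q^{k-s}\binom{k-1-t}{s-t-1}_q$; if $x\notin Supp(C)$ there is no such $E$. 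Since $Supp(D_t)\subseteq Supp(C)$, double counting gives the linear equation
\[\binom{k-t}{s-t}_q d_s=\binom{k-t}{s-t}_q w(D_t)+\bigl(w(C)-w(D_t)\bigr)\,q^{k-s}\binom{k-1-t}{s-t-1}_q.\]
Substituting $w(C)=d_k$ from the previous paragraph, the coefficient of $w(D_t)$ becomes $\binom{k-t}{s-t}_q-q^{k-s}\binom{k-1-t}{s-t-1}_q=\binom{k-1-t}{s-t}_q$, which is nonzero precisely because $s\leqslant k-1$; solving and simplifying the Gaussian binomials recovers $w(D_t)=d_s(q^k-q^{k-t})/(q^k-q^{k-s})$, the values $t=0$ (giving $0$) and $t=s$ (giving $d_s$) being immediate checks.

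The routine part is the Gaussian-binomial bookkeeping: verifying the two displayed simplifications of the ratios and the final algebraic collapse in the $t<s$ case. The one genuine subtlety, and the place where the hypothesis $s\leqslant k-1$ is indispensable, is that the argument for $t<s$ is not self-contained — it feeds on the value $d_k$ produced by the $t\geqslant s$ case — and the linear equation for $w(D_t)$ is invertible only when $\binom{k-1-t}{s-t}_q\neq 0$, i.e. when $k-1\geqslant s$. For $s=k$ the hypothesis is vacuous and the conclusion genuinely fails, so any correct proof must exploit this inequality at exactly this step.
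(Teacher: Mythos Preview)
Your argument is correct. The paper itself does not supply a proof of this theorem at all: it simply restates the result and defers to~\cite{Liu1}, remarking only that the original proof proceeds via the value-function framework on projective subspaces. So there is no in-paper proof to compare against line by line.

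That said, your route is genuinely different from what~\cite{Liu1} does. Their argument works inside projective geometry with a value function $m$ on $PG(k-1,q)$ and averages $m$ over flats (this is the machinery the paper borrows explicitly in the proof of Proposition~\ref{prop1}). You replace all of that with two elementary double counts using the basic identity $\sum_{d\in D}w(d)=q^{m-1}(q-1)\,w(D)$: for $t\geqslant s$ you average over $s$-dimensional subspaces \emph{inside} $D_t$, and for $t<s$ you average over $s$-dimensional subspaces of $C$ \emph{containing} $D_t$, feeding in the value $w(C)=d_k$ obtained from the first step. The Gaussian-binomial simplifications you quote are all standard and the algebra checks out; in particular the numerator in the $t<s$ case factors as $(q^t-1)(q^{k-t}-q^{s-t})$, which collapses exactly to the claimed formula. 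Your observation that the coefficient $\binom{k-1-t}{s-t}_q$ in the $t<s$ equation is nonzero precisely when $s\leqslant k-1$ is the right place to locate that hypothesis, and is a point the paper's citation-only treatment leaves implicit.

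What each approach buys: your argument is fully self-contained in the paper's own language and avoids importing the value-function apparatus. The value-function approach of~\cite{Liu1}, on the other hand, sits in a broader framework that also yields the inequality (with equality characterisation) used in the proof of Proposition~\ref{prop1}, so the paper gets mileage out of citing it there as well.
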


This shows that being constant weight is the same as being constant weight in any dimension, except in dimension $0$ and dimension $k$.

\begin{corollary}\label{cdi} Let $C$ be a $k$-dimensional linear code over $\Fq$. Suppose that $C$ is of constant weight. Then the weight hierarchy $(d_1,...,d_k)$ is given by \[d_i=d\frac{q^i-1}{q^{i-1}(q-1)},\] where $d$ is the weight of any non-zero codeword.
\end{corollary}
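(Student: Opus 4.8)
The plan is to deduce this statement immediately from Theorem~\ref{tdi} by specializing to $s=1$. Since $C$ is of constant weight, every $1$-dimensional subcode of $C$ has the same weight; recall that the weight of a nonzero codeword equals the weight of the $1$-dimensional subcode it generates, and that $d_1=d$ is the minimum of these weights, so in fact all $1$-dimensional subcodes have weight exactly $d_1$. Hence the hypothesis of Theorem~\ref{tdi} holds with $s=1$, as long as $k\geq 2$. The case $k=1$ should be disposed of separately: then the weight hierarchy consists of $d_1=d$ alone, and the claimed formula reduces to $d_1 = d\frac{q-1}{q^{0}(q-1)} = d$, so it holds trivially.

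Assuming $k\geq 2$, Theorem~\ref{tdi} with $s=1$ gives, for every $1\leq i\leq k$,
\[
d_i \;=\; d_1\cdot\frac{q^k-q^{k-i}}{q^k-q^{k-1}} \;=\; d\cdot\frac{q^k-q^{k-i}}{q^k-q^{k-1}}.
\]
The only remaining work is an elementary simplification: factoring $q^{k-i}$ out of the numerator and $q^{k-1}$ out of the denominator yields
\[
d_i \;=\; d\cdot\frac{q^{k-i}(q^i-1)}{q^{k-1}(q-1)} \;=\; d\cdot\frac{q^i-1}{q^{i-1}(q-1)},
\]
which is exactly the asserted formula.

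There is essentially no obstacle: the substantive content is entirely absorbed into Theorem~\ref{tdi}, and the only points deserving a word of care are the legitimacy of invoking it with $s=1$ (which requires $k\geq 2$, hence the separate trivial treatment of $k=1$) and the routine rearrangement of the rational expression. Should one wish to be self-contained, one could instead argue directly, e.g.\ by counting incidences between $1$-dimensional and $2$-dimensional subcodes and inducting on the dimension $i$ to propagate the constant-weight property upward; but with Theorem~\ref{tdi} already at hand this is unnecessary, and the corollary is best presented as the one-line consequence it is.
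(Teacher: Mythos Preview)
Your proposal is correct and matches the paper's intent: the corollary is stated immediately after Theorem~\ref{tdi} with no separate proof, precisely because it is the specialization $s=1$ together with the algebraic simplification you carry out. Your extra care in isolating the trivial case $k=1$ (needed because Theorem~\ref{tdi} assumes $1\leqslant s\leqslant k-1$) is a nice touch that the paper leaves implicit.
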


\begin{example}\label{exawh}Let $C$ be the ternary code given by the generator matrix \[G= \begin{bmatrix}
1& 0& 1& 2& 0& 1& 2& 0& 1& 2& 0& 1& 2\\
0& 1& 1& 1& 0& 0& 0& 1& 1& 1& 2& 2& 2\\
0& 0& 0& 0& 1& 1& 1& 1& 1& 1& 1& 1& 1\end{bmatrix}\] This a constant weight code with weight $9$. Its weight hierarchy is \[(d_1,d_2,d_3)=(9,12,13).\]
\end{example}

The converse of this corollary is also true, namely, if a linear code has the weight hierarchy of a constant weight code, then it is itself a constant weight code. But there is an even stronger converse: 

\begin{proposition} \label{prop1}
 Let $C$ be a $[n,k,d]_q$-code. Assume that $d_k=\frac{q^k-1}{q^{k-i}(q^i-1)}d_i$ for some $1 \leqslant i < k$. Then $C$ is a constant weight code with weight $d_k\frac{q^{k-1}(q-1)}{q^k-1}.$
\end{proposition}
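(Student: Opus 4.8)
The plan is to recognize the hypothesis as the equality case of a natural averaging inequality on the weights of the $i$-dimensional subcodes of $C$, and then feed the resulting constant-weight-in-dimension-$i$ property into Theorem~\ref{tdi}. Throughout, $\binom{m}{i}_q$ denotes the Gaussian binomial coefficient counting $i$-dimensional subspaces of an $m$-dimensional $\Fq$-space, and I use that $d_k = w(C) = \#Supp(C)$ since $C$ is its own unique $k$-dimensional subcode.

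First I would perform a double count of incidences between $i$-dimensional subcodes and coordinates. Fix $x \in Supp(C)$. The evaluation map $C \to \Fq$, $c \mapsto c|_x$, is surjective, so its kernel $C_x = \{c \in C : c|_x = 0\}$ is a subspace of $C$ of dimension $k-1$. An $i$-dimensional subcode $D$ satisfies $x \in Supp(D)$ if and only if $D \not\subseteq C_x$, and the number of such $D$ is $\binom{k}{i}_q - \binom{k-1}{i}_q$, independently of $x$ (for $x \notin Supp(C)$ there are none). Summing the contribution of each coordinate therefore gives
\[
\sum_{D \in \mathcal{C}_i(C)} w(D) \;=\; \#Supp(C)\left(\binom{k}{i}_q - \binom{k-1}{i}_q\right) \;=\; d_k\left(\binom{k}{i}_q - \binom{k-1}{i}_q\right).
\]

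Next I would bound the left-hand side below using $w(D) \geqslant d_i$ for every $D \in \mathcal{C}_i(C)$, with equality throughout if and only if every $i$-dimensional subcode has weight exactly $d_i$. This yields $d_i\binom{k}{i}_q \leqslant d_k\bigl(\binom{k}{i}_q - \binom{k-1}{i}_q\bigr)$. Dividing by $\binom{k}{i}_q$ and using the elementary identity $\binom{k-1}{i}_q/\binom{k}{i}_q = (q^{k-i}-1)/(q^k-1)$ rewrites this as $d_k \geqslant \frac{q^k-1}{q^{k-i}(q^i-1)}\,d_i$. The hypothesis asserts exactly equality here, so the equality case forces all $i$-dimensional subcodes of $C$ to have the common weight $d_i$.

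Finally, since $1 \leqslant i \leqslant k-1$, I would apply Theorem~\ref{tdi} with $s = i$: the code has a full weight hierarchy determined by $d_i$, and in particular every $1$-dimensional subcode has the same weight $d_1 = d_i\frac{q^k-q^{k-1}}{q^k-q^{k-i}}$, so $C$ is a constant weight code. Substituting $d_i = \frac{q^{k-i}(q^i-1)}{q^k-1}\,d_k$ from the hypothesis gives the stated weight $d_k\frac{q^{k-1}(q-1)}{q^k-1}$. The only genuinely substantive step is the first one — setting up the incidence count and identifying the hypothesis as its equality case; the Gaussian-binomial simplification and the appeal to Theorem~\ref{tdi} are routine.
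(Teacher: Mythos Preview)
Your proof is correct and follows the same overall strategy as the paper: establish the inequality $d_k \geqslant \frac{q^k-1}{q^{k-i}(q^i-1)}\,d_i$ with equality if and only if every $i$-dimensional subcode has weight $d_i$, and then invoke Theorem~\ref{tdi}. The paper obtains this inequality via the value-function formalism of~\cite{Liu1}, using $d_i = d_k - \max\{m(P_{k-i})\}$ and an averaging lemma on projective subspaces; your incidence double count (pairs $(x,D)$ with $x\in Supp(D)$) is the same averaging argument written directly on the code side rather than on the dual projective side. The mathematical content is identical, but your version is self-contained and avoids importing notation from~\cite{Liu1}, at the cost of redoing a computation that the paper outsources.
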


\begin{proof}

The proof is based on an easy corollary of lemma 1 in~\cite{Liu1}. We keep their notation. \[N_{r,1} m(PG(k-1,q)) \leqslant N_r \vartheta\] and there is equality if and only if all the $r$-dimensional projective subspaces have the same value $\vartheta$. Since \[d_i= d_k - max\{m(P_{k-i}),\ P_{k-i}\textrm{ is a }(k-i)\textrm{-dimensional projective subspace}\},\] this amounts to \[d_k \geqslant d_i \frac{q^{k-1}}{q^{k-i}(q^i-1)}\] with equality if and only if all the $(k-i)$-dimensional subspaces have the same value, or equivalently if and only if all the $i$-dimensional subcodes of $C$ have the same weight. The proposition then follows from Theorem 1 in~\cite{Liu1}.\qed
\end{proof}

\begin{corollary} \label{cor2}
Let $C$ be linear code over $\Fq$ of dimension $k$. Assume that there exists an integer $\alpha$ such that \[d_i=\alpha \frac{q^i-1}{q^{i-1}(q-1)}\  \forall 1\leqslant i \leqslant k.\] Then $C$ is constant weight, of weight $\alpha$.
\end{corollary}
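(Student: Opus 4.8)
The plan is to reduce the statement to Proposition~\ref{prop1} by checking that the hypothesis of that proposition is satisfied. Corollary~\ref{cor2} assumes that $d_i = \alpha\,\frac{q^i-1}{q^{i-1}(q-1)}$ holds for \emph{all} $1 \leqslant i \leqslant k$; in particular it holds for $i=1$ (giving $d_1 = d = \alpha$) and for $i=k$. So first I would write down the two instances
\[
d_1 = \alpha, \qquad d_k = \alpha\,\frac{q^k-1}{q^{k-1}(q-1)}.
\]

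Next, I would verify that these two values satisfy the relation required in Proposition~\ref{prop1}, namely $d_k = \frac{q^k-1}{q^{k-i}(q^i-1)}\,d_i$ for some $1\leqslant i<k$. Taking $i=1$, the required identity becomes $d_k = \frac{q^k-1}{q^{k-1}(q-1)}\,d_1$, which is exactly the $i=k$ instance of the hypothesis combined with $d_1=\alpha$ (assuming $k\geqslant 2$, so that $i=1$ is a legitimate choice with $1\leqslant i<k$; the case $k=1$ is trivial since then the code is automatically constant weight with weight $d_1=\alpha$). Hence Proposition~\ref{prop1} applies and tells us that $C$ is a constant weight code whose weight equals $d_k\,\frac{q^{k-1}(q-1)}{q^k-1} = \alpha$.

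I do not anticipate a real obstacle here: the corollary is essentially an immediate specialization of Proposition~\ref{prop1}, and indeed only uses the $i=1$ and $i=k$ instances of the stated hypothesis (the intermediate $d_i$ for $1<i<k$ are then forced by Corollary~\ref{cdi} / Theorem~\ref{tdi}, so they are consistent but redundant as inputs). The only points needing a word of care are the boundary case $k=1$ and the bookkeeping to confirm that the arithmetic factor in Proposition~\ref{prop1}'s conclusion simplifies back to $\alpha$, both of which are routine.
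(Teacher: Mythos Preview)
Your proposal is correct and is precisely the intended argument: the paper states Corollary~\ref{cor2} as an immediate consequence of Proposition~\ref{prop1}, and you have spelled out that deduction (apply Proposition~\ref{prop1} with $i=1$, using $d_1=\alpha$ and the $i=k$ instance of the hypothesis). Your handling of the edge case $k=1$ and your observation that the intermediate $d_i$ are redundant inputs are both apt.
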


\begin{remark}The Griesmer bound says that for a $[n,k,d]_q$ code, then $d_k \geqslant \sum_{i=0}^{k-1}\lceil \frac{d}{q^i} \rceil$. It is obvious that constant weight codes meet their Griesmer bound. The previous corollary could indicate that the converse is true. But it is not. Consider the $[5,2,3]_2$ code given by the generator matrix \[\begin{bmatrix} 1&1&1&0&0 \\ 0&0&1&1&1\end{bmatrix}\] This is not a constant weight code, but it reaches its Griesmer bound.
\end{remark}

%
%

\section{Betti numbers of the Stanley-Reisner resolution associated to a constant weight linear code}\label{sec3}

 We are now able to give the $\N$-graded resolution of a constant weight linear code. We use the notation of~\cite{Johnsen1}. As shown there, the $\N^n$-graded Betti numbers $\beta_{i,\sigma}$ of the Stanley-Reisner resolution of the matroid associated to the code are all zero, except for those subsets $\sigma$ of the ground set that are minimal (for the inclusion relation) such that $\# \sigma -rank(\sigma)=i$. We write \[N_i=\left\{\sigma \subset \{1,\cdots,n\},\ \#\sigma-rank(\sigma)=i\right\}.\]
Our first goals in this section are to show that if $\sigma \in N_i$, then $\sigma = Supp(C')$ for a subcode $C'$ of $C$ of dimension $i$, and then to prove that there is a one-to-one correspondence between subcodes and their supports. 

The first part of our plan is valid for any code:

\begin{lemma}\label{n=i} Let $C$ be a $[n,k,d]_q$ code. Let $0\leqslant i \leqslant k$ and $\sigma \in N_i$. Then there exists a subcode $C'$ of $C$ of dimension $i$ such that \[\sigma = Supp(C').\]
\end{lemma}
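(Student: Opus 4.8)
First I would translate the combinatorial hypothesis into linear algebra using the parity check matrix. Fix a parity check matrix $H_C$, so that the matroid $\Delta(H_C)$ has ground set $E = \{1,\dots,n\}$ indexed by the columns of $H_C$, and the rank of a subset $\sigma$ is the rank of the corresponding column submatrix. The condition $\sigma \in N_i$ says that $\#\sigma - rank(\sigma) = i$, i.e.\ the columns indexed by $\sigma$ span a space of dimension $\#\sigma - i$, so there are exactly $i$ independent linear relations among them. Each such relation is precisely a codeword of $C$ supported inside $\sigma$: indeed $c H_C^t = 0$ means the columns of $H_C$ combine to zero with coefficients $c_1,\dots,c_n$, and $Supp(c)\subseteq\sigma$ exactly when the relation only involves columns from $\sigma$. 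Thus the set $C_\sigma := \{c \in C : Supp(c) \subseteq \sigma\}$ is a subcode, and by rank–nullity its dimension equals $\#\sigma - rank(\sigma) = i$. This already produces a subcode $C'=C_\sigma$ of dimension $i$ with $Supp(C') \subseteq \sigma$.

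The remaining and main point is to upgrade the inclusion $Supp(C_\sigma) \subseteq \sigma$ to an equality, and this is where the minimality built into the definition of $N_i$ is essential. Suppose for contradiction that some $x \in \sigma$ lies outside $Supp(C_\sigma)$. I would consider $\sigma' = \sigma \setminus \{x\}$ and compare ranks. Since no codeword supported in $\sigma$ uses coordinate $x$, every linear relation among the columns indexed by $\sigma$ is already a relation among the columns indexed by $\sigma'$; hence the nullity does not drop when we remove column $x$, so $rank(\sigma') = rank(\sigma) - 1$ while $\#\sigma' = \#\sigma - 1$, giving $\#\sigma' - rank(\sigma') = \#\sigma - rank(\sigma) = i$. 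But then $\sigma' \subsetneq \sigma$ also has nullity $i$, contradicting the fact (from the definition of $N_i$ as the set of subsets minimal with this property, as recalled just before the lemma from \cite{Johnsen1}) that $\sigma$ is minimal among subsets $\tau$ with $\#\tau - rank(\tau) = i$. Therefore no such $x$ exists and $Supp(C_\sigma) = \sigma$.

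The only genuine obstacle is the bookkeeping in the rank comparison: one must be careful that the nullity (number of independent relations) really is unchanged when deleting a coordinate outside the support, and conclude correctly that the rank drops by exactly one (it cannot stay the same, since $x \in \sigma$ means column $x$ lies in $\sigma$ and the ambient count decreases). Everything else is a direct unwinding of definitions, and the statement that $N_i$ records only the inclusion-minimal such subsets is quoted from the setup, so no extra work is needed there. I would present the argument in exactly this order: (1) identify $C' = C_\sigma$ and compute its dimension via rank–nullity; (2) assume a stray coordinate $x$ and pass to $\sigma'$; (3) derive the nullity-preserving rank drop; (4) contradict minimality of $\sigma$; (5) conclude $\sigma = Supp(C')$.
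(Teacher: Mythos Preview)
Your proof is correct and takes a genuinely different route from the paper's. The paper invokes a structural result from \cite{Johnsen1}: every (minimal) $\sigma \in N_i$ decomposes as a non-redundant union of $i$ circuits $\tau_1,\dots,\tau_i$; each circuit is the support of some codeword $c_j$, and non-redundancy (each $\tau_l$ contains a point lying in no other $\tau_m$) is then used to check that $c_1,\dots,c_i$ are linearly independent, so that $C'=\langle c_1,\dots,c_i\rangle$ has dimension $i$ and support $\bigcup\tau_j=\sigma$. Your argument bypasses the circuit decomposition entirely: you take $C'=C_\sigma=\{c\in C:Supp(c)\subseteq\sigma\}$, read off $\dim C_\sigma=i$ from rank--nullity applied to the column submatrix of $H_C$, and then use the minimality of $\sigma$ directly to force $Supp(C_\sigma)=\sigma$. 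This is more self-contained (no appeal to the external lemma on circuit covers) and has the pleasant side effect of identifying $C'$ canonically as the \emph{full} subcode supported in $\sigma$, rather than via a chosen generating set. The paper's approach, in exchange, exhibits an explicit basis of $C'$ by codewords whose supports are circuits of the matroid---a slightly finer structural statement, though one not exploited later.
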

\begin{proof} Any circuit of the associated matroid is the support of a codeword. Namely a circuit is a minimal dependent subset of the columns of a parity check matrix, and this corresponds to a codeword (the converse is not true - see Example~\ref{1-1}). Since $\sigma \in N_i$, we know from~\cite{Johnsen1} that there exists a non-redundant set of $i$ circuits $\tau_1,\cdots,\tau_i$ such that \[\sigma= \bigcup_{j=1}^i \tau_j.\] As any circuit is the support of a codeword, we have found $i$ codewords $c_1,\cdots,c_i$ such that \[\sigma = \bigcup_ {j=1}^i Supp(c_j) = Supp(<c_1,\cdots,c_i>).\] It just remains to show that the subcode generated by these codewords is of dimension $i$. The non-redundancy property is the same as saying that there exists $i$ points $\{x_1,\cdots,x_i\}$ in $\{1, \cdots, n\}$ such that $x_l \in Supp(c_m)$ if and only if $l=m$. If $\sum_{j=1}^i \lambda_j c_j=0$ , then for every $1\leqslant l \leqslant i$, $\left.\left(\sum_{j=1}^i \lambda_jc_j\right)\right|_{x_l} = \lambda_l c_l|_{x_l} = 0 $ which implies that $\lambda_l=0$.\qed
\end{proof}

\begin{corollary} All elements in $N_i$ have the same cardinality $d_i$. The resolution is therefore pure.
\end{corollary}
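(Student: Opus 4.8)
The plan is to combine Lemma~\ref{n=i} with Theorem~\ref{tdi} (the result of~\cite{Liu1}). Recall that in this section $C$ is a constant weight code of dimension $k$, so in particular every $s$-dimensional subcode has the same weight $d_s$ for each $1 \leqslant s \leqslant k-1$; we may take $s=1$ as the hypothesis of Theorem~\ref{tdi}. Hence for every $0 \leqslant t \leqslant k$ and every $t$-dimensional subcode $D_t$ of $C$ we have $w(D_t) = d_t$, a quantity depending only on $t$ and not on the particular subcode chosen.

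Now let $i$ be an index with $N_i \neq \emptyset$ and let $\sigma \in N_i$. By Lemma~\ref{n=i} there is a subcode $C'$ of $C$ of dimension exactly $i$ with $\sigma = Supp(C')$, so that $\# \sigma = w(C')$. Since $C$ is of constant weight, the previous paragraph gives $w(C') = d_i$, independently of the choice of $C'$ and hence of $\sigma$. Therefore every element of $N_i$ has cardinality exactly $d_i$, which is the first assertion.

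For the purity statement, recall from~\cite{Johnsen1} (as quoted just before Lemma~\ref{n=i}) that the $\N^n$-graded Betti number $\beta_{i,\sigma}$ is zero unless $\sigma \in N_i$. Passing to the $\N$-graded Betti numbers via $\beta_{i,e} = \sum_{|\sigma| = e} \beta_{i,\sigma}$, we see that $\beta_{i,e}$ can be nonzero only when $e = \# \sigma$ for some $\sigma \in N_i$; by the first part this forces $e = d_i$. Thus for each homological degree $i$ there is at most one internal degree, namely $d_i$, in which $\beta_{i,e} \neq 0$, which is precisely the definition of a pure resolution.

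There is essentially no obstacle here: the content is entirely carried by Lemma~\ref{n=i} and Theorem~\ref{tdi}. The only point requiring a word of care is the boundary cases $i = 0$ and $i = k$: for $i=0$ we have $N_0 = \{\emptyset\}$ and $d_0 = 0$, which is consistent, and the relevant range of $i$ for the resolution is $0 \leqslant i \leqslant n - \mathrm{rank}(E) = k$, so no index falls outside the scope of Theorem~\ref{tdi}. One should also note that a stronger statement — the exact value of each $d_i$, and hence each nonzero internal degree — follows from Corollary~\ref{cdi}, but for purity the weaker ``constant on $N_i$'' claim suffices.
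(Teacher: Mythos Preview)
Your proof is correct and follows essentially the same route as the paper's: Lemma~\ref{n=i} combined with Theorem~\ref{tdi} gives the constant cardinality of elements in $N_i$, and the characterization from~\cite{Johnsen1} of where the $\N^n$-graded Betti numbers are supported yields purity. The paper's own proof is the terse two-sentence version of exactly this argument.
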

\begin{proof} Theorem~\ref{tdi} shows that all the elements of $N_i$ have the same cardinality. The second part is~\cite{Johnsen1}. \qed
\end{proof}

The following is generally not valid for general codes, but it is for constant weight codes.

\begin{lemma}\label{lemma5}
Let $C$ be a constant weight $[n,k,d]_q$ code. Let $C'$ be a subcode and $c$ be a codeword. Then we have \[c \in C' \Leftrightarrow Supp(c) \subset Supp(C').\]
\end{lemma}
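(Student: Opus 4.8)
The implication $c \in C' \Rightarrow \operatorname{Supp}(c) \subset \operatorname{Supp}(C')$ is trivial and holds for every code: if $c$ lies in $C'$ then every coordinate where $c$ is nonzero is, by definition, a coordinate in the support of $C'$. So the whole content is the reverse implication, and this is where the constant-weight hypothesis must be used.

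The plan is to argue by contradiction: suppose $c \notin C'$ but $\operatorname{Supp}(c) \subset \operatorname{Supp}(C')$. Consider the subcode $C'' = \langle C', c\rangle$, which strictly contains $C'$, so $\dim C'' = \dim C' + 1$. Since $\operatorname{Supp}(c) \subset \operatorname{Supp}(C')$, adjoining $c$ adds no new coordinates, hence $\operatorname{Supp}(C'') = \operatorname{Supp}(C')$, i.e. $w(C'') = w(C')$. Now invoke Corollary \ref{cdi} (the weight hierarchy of a constant weight code): writing $t = \dim C'$, we have $w(C') = d_t = d\,\frac{q^t-1}{q^{t-1}(q-1)}$ and $w(C'') = d_{t+1} = d\,\frac{q^{t+1}-1}{q^{t}(q-1)}$. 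Since $d \geq 1$ and the function $i \mapsto \frac{q^i-1}{q^{i-1}(q-1)}$ is strictly increasing in $i$ (for $q \geq 2$), we get $d_{t+1} > d_t$, contradicting $w(C'') = w(C')$. One edge case to dispatch: if $t = k$ then there is no $(t+1)$-dimensional subcode, but then $C' = C$ and $c \in C$ trivially; and if $c = 0$ the statement is vacuous, so we may assume $1 \le t \le k-1$ for the main argument.

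The only genuinely delicate point is making sure the strict monotonicity of the weight hierarchy is legitimately available. Corollary \ref{cdi} gives the exact formula $d_i = d\frac{q^i-1}{q^{i-1}(q-1)}$ for a constant weight code, and from there strict monotonicity is an elementary computation: $d_{i+1} - d_i = d\bigl(\frac{q^{i+1}-1}{q^i(q-1)} - \frac{q^i-1}{q^{i-1}(q-1)}\bigr) = \frac{d}{q^i(q-1)}\bigl((q^{i+1}-1) - q(q^i-1)\bigr) = \frac{d(q-1)}{q^i(q-1)} = \frac{d}{q^i} > 0$. I expect no real obstacle here; the lemma is essentially a repackaging of the fact that a constant weight code has a \emph{strictly} increasing weight hierarchy, so that distinct-dimensional subcodes cannot share a support.
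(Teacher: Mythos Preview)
Your argument is correct, and it takes a somewhat different route from the paper's. The paper proceeds by \emph{reducing} the dimension: given generators $c_1,\dots,c_i$ of $C'$ and a point $x\in\operatorname{Supp}(c)$ (scaled so that $c|_x=1$), it replaces each $c_j$ by $c'_j=c_j-(c_j|_x)c$, so that the subcode $C''=\langle c'_1,\dots,c'_i\rangle$ has support contained in $\operatorname{Supp}(C')\setminus\{x\}$; Theorem~\ref{tdi} then forces $\dim C''<i$, and a dependency among the $c'_j$ exhibits $c$ as a combination of the $c_j$. You instead \emph{increase} the dimension by forming $C''=\langle C',c\rangle$ and observe that $\operatorname{Supp}(C'')=\operatorname{Supp}(C')$, which contradicts Theorem~\ref{tdi} applied to dimensions $t$ and $t+1$. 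Your version is arguably more direct, since the contradiction appears immediately without the elimination step; the paper's version has the minor advantage of producing the membership $c\in C'$ constructively rather than by contradiction.

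One small correction of attribution: the equalities $w(C')=d_t$ and $w(C'')=d_{t+1}$ that you need are the content of Theorem~\ref{tdi} (with $s=1$), not of Corollary~\ref{cdi}. Corollary~\ref{cdi} only records the explicit formula for $d_i$; it does not by itself assert that \emph{every} subcode of dimension $i$ has weight exactly $d_i$. Once you cite Theorem~\ref{tdi} for this, your explicit computation of $d_{t+1}-d_t=d/q^t>0$ is fine (and in fact unnecessary, since strict monotonicity of the generalized Hamming weights holds for all linear codes).
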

\begin{proof} One way is trivial. Assume now that $Supp(c) \subset Supp(C')$. Write $C'=<c_1,\cdots ,c_i>$ where the $c_j$'s are linearly independant. Let $x \in Supp(c)$. We can assume that $c|_x=1$. Consider the following codewords: \[c'_j=c_j-\left(c_j|_x\right)c,\] and the subcode $C''=<c'_1,\cdots,c'_i>$. It is obvious that $Supp(C'') \subset Supp(C')-\{x\}$. From Theorem~\ref{tdi}, we know that \[\#Supp(C') = d_i,\] and therefore $Supp(C'')<d_i$. Theorem~\ref{tdi} again shows that the dimension of the code $C''$ is strictly less that $i$, or equivalently that $c \in C'$.\qed
\end{proof}

\begin{proposition}\label{prop3} Let $C$ be a constant weight $[n,k,d]_q$ code. Then the mapping \[\begin{array}{ccc}\left\{\textrm{Subcodes of }C\right\} &\longrightarrow& 2^{\{1,\cdots,n\}} \\ C' & \longmapsto & Supp(C')\end{array}\] is injective.
\end{proposition}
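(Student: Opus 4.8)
The plan is to show that if two subcodes $C'$ and $C''$ of a constant weight code $C$ have the same support, then they are equal. The key tool is Lemma~\ref{lemma5}, which for constant weight codes characterizes membership of a codeword in a subcode purely in terms of supports: $c \in C' \Leftrightarrow Supp(c) \subset Supp(C')$.

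First I would take $C'$ and $C''$ with $Supp(C') = Supp(C'')$, and aim to prove $C' \subseteq C''$ (the reverse inclusion is symmetric, giving equality). To do this, let $c$ be an arbitrary codeword in $C'$. Then $Supp(c) \subset Supp(C') = Supp(C'')$, so by Lemma~\ref{lemma5} applied to the subcode $C''$ we conclude $c \in C''$. Since $c$ was arbitrary, $C' \subseteq C''$; by symmetry $C'' \subseteq C'$, hence $C' = C''$. This establishes injectivity of the support map.

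I do not expect a serious obstacle here: the statement is essentially an immediate corollary of Lemma~\ref{lemma5}, and the only thing to be careful about is that Lemma~\ref{lemma5} is stated for a single codeword $c$ against a subcode $C'$, so one must apply it elementwise over all $c \in C'$ rather than trying to compare the two subcodes in one step. If anything is worth a sentence of justification, it is simply the remark that equality of sets follows from mutual inclusion, and that the hypothesis $Supp(C') = Supp(C'')$ is used symmetrically.
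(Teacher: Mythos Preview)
Your proposal is correct and follows exactly the paper's approach: the paper's one-line proof (``if $Supp(C') = Supp(C'')$, then any codeword of $C'$ is in $C''$ and vice versa'') is precisely your argument, invoking Lemma~\ref{lemma5} elementwise and then using symmetry. You have simply made explicit what the paper leaves implicit.
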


\begin{proof}
Indeed, if $Supp(C') = Supp(C'')$, then any codeword of $C'$ is in $C''$ and vice versa.\qed
\end{proof}

\begin{example}\label{1-1} The converses of Lemma~\ref{lemma5} and Proposition~\ref{prop3} are not true. Consider namely the binary code given by the generator matrix \[\begin{bmatrix} 1&0&0&1&0\\0&1&0&1&0 \\ 0&1&1&0&1\end{bmatrix}.\] Then $c_1=(1,1,1,1,1)$ is a codeword whose support is $\{1,2,3,4,5\}$. The subcode generated by the codewords $c_2=(1,1,0,0,0)$ and $c_3=(0,0,1,1,1)$ has also support $\{1,2,3,4,5\}$. But  $c_2 \notin <c_1>$ and $<c_1>\neq<c_2,c_3>$. Moreover, even if $\{1,2,3,4,5\}$ is the support of a codeword, this is not a circuit in the associated matroid (since it contains a smaller dependent subset, for example $\{3,4,5\}=Supp(c_3)$.
\end{example}

\begin{proposition}\label{contribution} Let  $C$ be a constant weight $[n,k,d]_q$ code. Let $0\leqslant i \leqslant k$. then \[N_i=\left\{Supp(C'),\ C'\textrm{ is a subcode of dimension }i\right\}.\]
\end{proposition}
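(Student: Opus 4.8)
The plan is to prove the two inclusions separately, and both directions are short given the machinery already in place.

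For the inclusion $N_i \subseteq \{Supp(C'),\ C' \text{ subcode of dimension } i\}$, I would simply invoke Lemma~\ref{n=i}: given $\sigma \in N_i$, that lemma already produces a subcode $C'$ of dimension exactly $i$ with $\sigma = Supp(C')$. Note that this direction does not even use the constant weight hypothesis; it is valid for any code. So the work is entirely in the reverse inclusion.

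For the inclusion $\{Supp(C'),\ C' \text{ subcode of dimension } i\} \subseteq N_i$, let $C'$ be a subcode of dimension $i$ and set $\sigma = Supp(C')$. By Theorem~\ref{tdi} (applicable since $C$ is constant weight), we have $\#\sigma = w(C') = d_i$. I would then argue about the rank of $\sigma$ in the matroid $\Delta(H_C)$. The key identity is that for any subset $\tau$ of the ground set, $\#\tau - rank(\tau)$ equals the dimension of the largest subcode supported inside $\tau$; equivalently, $rank(\tau) = n - k + \dim\{c \in C : Supp(c) \subseteq \tau\}$ up to the standard matroid duality bookkeeping. Concretely, the subcodes supported in $\sigma$ are exactly, by Lemma~\ref{lemma5}, the codewords $c$ with $Supp(c) \subseteq \sigma$, and these form a linear subspace; since $C' \subseteq C$ itself is such a subcode of dimension $i$, the largest subcode supported in $\sigma$ has dimension at least $i$, hence $\#\sigma - rank(\sigma) \geqslant i$. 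Combined with $\#\sigma = d_i$ and the definition $d_i = \min\{\#\tau : \#\tau - rank(\tau) = i\}$, minimality forces $\#\sigma - rank(\sigma) = i$ exactly (it cannot be larger, else $\sigma$ would witness a subcode of larger dimension with support of size only $d_i < d_{i+1} \leqslant \cdots$, contradicting monotonicity of the weight hierarchy, which follows from Corollary~\ref{cdi}). Therefore $\sigma \in N_i$.

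The main obstacle is making precise the correspondence between $\#\tau - rank(\tau)$ and the dimension of the maximal subcode with support contained in $\tau$, and checking that no subcode supported in $\sigma$ has dimension strictly greater than $i$. Lemma~\ref{lemma5} is exactly the tool that rules this out: if there were a codeword $c \notin C'$ with $Supp(c) \subseteq \sigma = Supp(C')$, that lemma would give $c \in C'$, a contradiction; iterating, the maximal subcode supported in $\sigma$ is precisely $C'$ itself, of dimension $i$. So in fact the argument is clean once Lemma~\ref{lemma5} and Theorem~\ref{tdi} are both invoked, and the real content has already been extracted in the preceding results.
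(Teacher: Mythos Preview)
Your argument for the reverse inclusion is correct, but it takes a different route from the paper. You compute the nullity $\#\sigma-rank(\sigma)$ directly, via the identity $\#\tau-rank(\tau)=\dim\{c\in C:Supp(c)\subseteq\tau\}$, and then use Lemma~\ref{lemma5} (or alternatively the strict monotonicity $d_i<d_{i+1}$ from Corollary~\ref{cdi}) to see that the maximal subcode supported in $\sigma=Supp(C')$ is $C'$ itself, forcing the nullity to equal $i$. Two small points: your displayed formula $rank(\tau)=n-k+\dim\{c:Supp(c)\subseteq\tau\}$ is not correct as written (take $\tau=E$); the identity you actually need and use is the one above. Also, since in this paper $N_i$ is used to mean the \emph{inclusion-minimal} sets of nullity $i$ (cf.\ the proof of Lemma~\ref{n=i} and the corollary following it), you should add one line: having shown $\#\sigma=d_i$ and nullity $i$, minimality is immediate because nullity is monotone under inclusion and $d_i$ is the smallest size of a nullity-$i$ set.

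The paper's own proof never computes the nullity of $Supp(C')$ and does not invoke Lemma~\ref{lemma5} here. Once $\#Supp(C')=d_i$ is known from Theorem~\ref{tdi}, it argues by contradiction: if $Supp(C')\notin N_i$ there would be a proper subset $X\in N_i$, and applying Lemma~\ref{n=i} to $X$ yields a dimension-$i$ subcode with support of cardinality $\#X<d_i$, contradicting Theorem~\ref{tdi} again. So the paper reuses Lemma~\ref{n=i} for both inclusions, trading your rank--nullity computation for a second appeal to that lemma. Your approach is a bit more informative (it pins down exactly which subcode is supported in $\sigma$), while the paper's is slightly shorter and avoids introducing the nullity--subcode dictionary.
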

\begin{proof}One inclusion is Lemma~\ref{n=i}. Let now $C'$ be a subcode of dimension $i$. Then by Theorem~\ref{tdi}, we know that $\#Supp(C') = d_i$. If it wasn't in $N_i$, then there would exists a subset $X \subsetneq Supp(C')$ such that $X \in N_i$. By Lemma~\ref{n=i} again, we would find a subcode $C''$ of dimension $i$ such that $Supp(C'')=X$. But again by Theorem~\ref{tdi}, we would get that \[d_i = \#Supp(C'') = \#X < \#Supp(C') = d_i\] which is absurd.\qed
\end{proof}

From Proposition~\ref{contribution} we know that the non-zero contributions to the term of homological degree $i$ in the Stanley-Reisner resolution of the matroid $\Delta$ associated to a constant weight linear code $C$ come from its subcodes of dimension $i$. We also know (\cite[Hochster's formula]{1Miller}) that  

\[\beta_{i,\sigma} = \tilde{h}_{|\sigma|-i-1}(\Delta|_\sigma,\K).\] 

Let $\Delta '$ be the simplicial complex where the facets are the independent sets of the matroid $\Delta(H_{C'})$, for $H_{C'}$ a parity check matrix of $C'$ (for example obtainable by adding an appropriate number of rows to $H_{C'}$). 

\begin{lemma} \label{indep}
Let $C'$ be a subcode of a linear code $C$ of constant weight. If $\sigma \subset Supp(C')$, then $\Delta '|_\sigma=\Delta|_\sigma$.
In particular \[\beta_{i,\sigma}(R(C')) = \beta_{i,\sigma}(R(C)).\]
\end{lemma}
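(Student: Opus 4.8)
The plan is to show that the two simplicial complexes $\Delta'$ and $\Delta$ have exactly the same faces among subsets of $\mathrm{Supp}(C')$, from which the equality of restricted complexes — and hence of the local Betti numbers via Hochster's formula — is immediate. Recall that faces of $\Delta$ (resp.\ $\Delta'$) are the independent sets of the matroid $\Delta(H_C)$ (resp.\ $\Delta(H_{C'})$), i.e.\ the sets of columns of a parity check matrix that are linearly independent; equivalently, a subset $\sigma$ is a \emph{non}-face precisely when it contains the support of some nonzero codeword. So the task reduces to: for $\sigma \subset \mathrm{Supp}(C')$, $\sigma$ contains the support of a nonzero word of $C$ if and only if it contains the support of a nonzero word of $C'$.

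One direction is trivial, since $C' \subseteq C$. For the other direction, suppose $\sigma \subset \mathrm{Supp}(C')$ contains $\mathrm{Supp}(c)$ for some nonzero $c \in C$. Then $\mathrm{Supp}(c) \subset \mathrm{Supp}(C')$, and here is where the constant-weight hypothesis enters: by Lemma~\ref{lemma5}, $\mathrm{Supp}(c) \subset \mathrm{Supp}(C')$ forces $c \in C'$. Hence $\sigma$ contains the support of a nonzero codeword \emph{of $C'$}, so $\sigma$ is a non-face of $\Delta'$ as well. Combining the two directions, $\sigma$ is a face of $\Delta$ iff it is a face of $\Delta'$, whenever $\sigma \subset \mathrm{Supp}(C')$; therefore $\Delta|_\sigma = \Delta'|_\sigma$ for every such $\sigma$.

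For the Betti number statement, apply Hochster's formula in the form already quoted above: $\beta_{i,\sigma}(R(C)) = \tilde h_{|\sigma|-i-1}(\Delta|_\sigma,\K)$ and likewise $\beta_{i,\sigma}(R(C')) = \tilde h_{|\sigma|-i-1}(\Delta'|_\sigma,\K)$. If $\sigma \subset \mathrm{Supp}(C')$, the two restricted complexes coincide by the previous paragraph, so the reduced homology groups agree and the Betti numbers are equal. (If $\sigma$ is not contained in $\mathrm{Supp}(C')$ the claimed equality is not asserted, so nothing more is needed.)

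The only genuinely non-routine ingredient is the invocation of Lemma~\ref{lemma5}, which is exactly the point where constant weight is used and without which the statement fails — Example~\ref{1-1} already shows that a subset can be the support of a codeword of $C$ without containing the support of any codeword lying in a given subcode. Everything else is bookkeeping: translating "face" into "does not contain a circuit/codeword support" and then applying Hochster's formula term by term. I would write the argument in essentially the three short steps above.
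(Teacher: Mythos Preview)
Your proof is correct and follows essentially the same route as the paper's own argument: you show that a subset of $\mathrm{Supp}(C')$ is dependent in $\Delta$ iff it is dependent in $\Delta'$, with the nontrivial direction relying on Lemma~\ref{lemma5} to pull a codeword of $C$ with support in $\mathrm{Supp}(C')$ down into $C'$, and then you invoke Hochster's formula. The only cosmetic difference is that the paper phrases the trivial direction in terms of linear independence of columns of $H_C$ versus $H_{C'}$, whereas you phrase it via the inclusion $C'\subset C$; these are of course equivalent.
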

\begin{proof}
Clearly, if some columns of $H_{C}$ indexed by a subset $\tau$ of $\sigma$ are independent, then the corresponding columns of $H_{C'}$ are independent. If, on the other hands the columns of $H_{C}$ indexed by such a $\tau$ are dependent, then there is a codeword $c \in C$ with support inside $\tau \subset \sigma \subset Supp(C')$.
By Lemma~\ref{lemma5} we then have $c \in C'.$ Hence the columns indexed by $\tau$ are dependent in $H_{C'}$ also. Hence the lemma holds.\qed
\end{proof}

\begin{example}The previous lemma doesn't necessarily hold if the code is not constant weight. For the code given in Example~\ref{1-1}, the matroid $\Delta$ associated to it has bases set \[\{\{1,5\},\{2,5\},\{1,3\},\{3,5\},\{2,3\},\{3,4\},\{4,5\}\}\] while the subcode generated by $c_2,c_3$ has a associated matroid $\Delta'$ with bases set \[\{\{1,3,5\},\{2,3,4\},\{1,4,5\},\{2,4,5\},\{2,3,5\},\{1,3,4\}\}.\] Take $\sigma=\{1,2,3,4,5\}$. Then $\Delta|_\sigma = \Delta \neq \Delta' = \Delta'|_\sigma.$
\end{example}

Before proving our main theorem, we need a combinatorial relation between the number of subcodes of a given dimension. A Grassmannian is a space that parametrizes all the linear subspaces of a given dimension of a vector space. Translated to coding theory, a Grassmannian is a space that parametrizes all the linear subcodes of a given dimension of a linear code. The number of linear subspaces (alt. subcodes) of dimension $r$ of a vector space (alt. code) of dimension $k$ over $\Fq$ is given by \[{k \brack r}_q = \frac{f(k,q)}{f(r,q)f(k-r,q)}\] where $f(n,q) = \prod_{i=1}^{n}(q^i-1).$
\begin{lemma}\label{gauss} Let $k \geqslant 0$. Then \[\sum_{i=0}^{k-1}(-1)^{k+i-1}{k \brack i}_q q^\frac{i(i-1)}{2} = q^\frac{k(k-1)}{2}.\]
\end{lemma}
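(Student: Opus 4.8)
The identity to prove is
\[
\sum_{i=0}^{k-1}(-1)^{k+i-1}{k \brack i}_q q^{\frac{i(i-1)}{2}} = q^{\frac{k(k-1)}{2}}.
\]
The plan is to recognize this as a specialization of the $q$-binomial theorem. Recall the finite $q$-binomial theorem in the form
\[
\prod_{j=0}^{k-1}(1+q^j t) = \sum_{i=0}^{k}{k \brack i}_q q^{\frac{i(i-1)}{2}} t^i .
\]
First I would substitute $t=-1$ into this identity. On the left-hand side the factor corresponding to $j=0$ is $1+q^0(-1)=0$, so the whole product vanishes, giving
\[
0 = \sum_{i=0}^{k}{k \brack i}_q q^{\frac{i(i-1)}{2}}(-1)^i .
\]
Then I would split off the top term $i=k$, which is ${k \brack k}_q q^{\frac{k(k-1)}{2}}(-1)^k = (-1)^k q^{\frac{k(k-1)}{2}}$, and solve for it: moving the remaining sum to the other side gives
\[
(-1)^k q^{\frac{k(k-1)}{2}} = -\sum_{i=0}^{k-1}(-1)^i {k \brack i}_q q^{\frac{i(i-1)}{2}} = \sum_{i=0}^{k-1}(-1)^{i+1}{k \brack i}_q q^{\frac{i(i-1)}{2}}.
\]
Multiplying both sides by $(-1)^k$ and noting $(-1)^k(-1)^{i+1} = (-1)^{k+i+1} = (-1)^{k+i-1}$ yields exactly the claimed formula. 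The edge case $k=0$ should be checked separately, where the left sum is empty (equal to $0$) and the right side is $q^0 = 1$; so strictly the statement needs $k\geq 1$, or one reads the $k=0$ case as vacuous in the application — I would add a remark to that effect if needed, or simply state the lemma for $k\geq 1$.

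The only real obstacle is making sure the correct form of the $q$-binomial theorem is invoked and that the sign bookkeeping is carried out cleanly; there is no genuine difficulty. An alternative, if one prefers a self-contained argument not quoting the $q$-binomial theorem, is an induction on $k$ using the Pascal-type recurrence ${k \brack i}_q = {k-1 \brack i-1}_q + q^i {k-1 \brack i}_q$ (or its mirror version), splitting the sum accordingly and matching terms against the inductive hypothesis; but the generating-function route above is shorter and I would use it.
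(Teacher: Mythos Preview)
Your proof is correct and takes essentially the same approach as the paper: substitute $t=-1$ in the Gauss binomial theorem $\sum_{i=0}^{k}{k \brack i}_q q^{i(i-1)/2}t^i=\prod_{j=0}^{k-1}(1+q^jt)$ and isolate the $i=k$ term. Your observation that the $k=0$ case fails (empty sum versus $q^0=1$) is a valid caveat the paper glosses over; in the application only $k\geqslant 1$ is needed.
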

\begin{proof}The result is obtained by taking $t=-1$ in Gauss binomial theorem (\cite{Konvalina1}): \[\sum_{i=0}^k{k \brack i}_qq^{\frac{i(i-1)}{2}}t^i = \prod_{i=0}^{k-1}(1+q^it).\]\qed
\end{proof}

We can now prove the main theorem of this section, namely a description of the Stanley-Reisner resolution of the matroid associated to a constant weight code.

\begin{theorem} \label{maincor}
Let $C$ be be a constant weight $[n,k,d]_q$ code. Then the $\N$-graded Stanley Reisner resolution of the matroid associated to the code is \[ 0 \leftarrow R(C) \leftarrow S \leftarrow \cdots \leftarrow S(-d_i)^{{k \brack i}_qq^\frac{i(i-1)}{2}} \leftarrow \cdots \leftarrow 0,\] where $d_i=d\frac{q^i-1}{q^{i-1}(q-1)}$ for $1\leqslant i \leqslant k$.
\end{theorem}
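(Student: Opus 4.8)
The plan is to compute the $\N$-graded Betti numbers $\beta_{i,d}(R(C))$ directly from Hochster's formula, using the results already established. By Proposition~\ref{contribution} the only subsets $\sigma$ contributing in homological degree $i$ are the supports of $i$-dimensional subcodes, and by the Corollary following Lemma~\ref{n=i} each such $\sigma$ has cardinality $d_i$, so the resolution is pure with $\beta_{i,d}=0$ unless $d=d_i$. Hence it suffices to show that $\beta_i = \beta_{i,d_i} = {k \brack i}_q q^{i(i-1)/2}$. By Proposition~\ref{prop3} the map $C' \mapsto Supp(C')$ is injective on subcodes, so the supports of the $i$-dimensional subcodes are ${k \brack i}_q$ distinct sets, each of size $d_i$; thus $\beta_{i,d_i} = \sum_{C'} \beta_{i,Supp(C')}(R(C))$ where the sum runs over the ${k \brack i}_q$ subcodes of dimension $i$. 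By Lemma~\ref{indep}, $\beta_{i,Supp(C')}(R(C)) = \beta_{i,Supp(C')}(R(C'))$, and since $Supp(C')$ is the whole ground set of $C'$ and (by Proposition~\ref{contribution} applied to $C'$, which is itself constant weight of full dimension $i$) it lies in $N_i$ for $C'$, this is the top Betti number $\beta_i(R(C'))$.

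So the problem reduces to a single computation: for a constant weight code $C'$ of dimension $i$, the top $\N^n$-graded Betti number in the Stanley-Reisner resolution — equivalently, by Hochster's formula, $\tilde{h}_{d_i - i - 1}(\Delta(H_{C'}), \K)$, the reduced homology of the matroid (as a simplicial complex on its full support) in the relevant degree — equals $q^{i(i-1)/2}$. I would prove this by induction on $i$. The base case $i=1$ gives a circuit on $d_1 = d$ vertices (the support of a one-dimensional code), whose independence complex is the boundary of a simplex, with $\tilde{h}_0 = 1 = q^0$. For the inductive step I would use the long exact sequence in reduced homology (a Mayer–Vietoris or deletion-contraction argument on the matroid), stratifying the $i$-dimensional code by its hyperplane subcodes: there are ${i \brack i-1}_q = \frac{q^i-1}{q-1}$ subcodes of dimension $i-1$, each contributing $q^{(i-1)(i-2)/2}$, and the overlaps are controlled by the lower-dimensional subcodes. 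The recursion should collapse to exactly the alternating sum identity of Lemma~\ref{gauss}.

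The main obstacle is precisely carrying out this inductive homology computation cleanly: one must identify the right filtration or cover of $\Delta(H_{C'})$ whose pieces are the restrictions to the supports of the proper subcodes, verify that the connecting maps in the associated long exact sequence behave as expected (in particular that the relevant homology is concentrated in a single degree so that the Euler-characteristic-style bookkeeping actually pins down the Betti number rather than just an alternating sum of them), and then recognize the resulting recursion as the $t=-1$ specialization in Lemma~\ref{gauss}. An alternative, possibly cleaner route that I would also consider: rather than an abstract topological argument, use the explicit structure of a constant weight code — up to the choice of generator matrix a constant weight $[n,k,d]_q$ code is, after deleting zero columns, a ``replicated'' version of the simplex code (each nonzero functional appearing the same number of times), so its matroid is a uniform-like matroid with parallel elements, and one can write down the resolution or at least the top homology of such a matroid combinatorially. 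Either way, the alternating-sum identity of Lemma~\ref{gauss} is the arithmetic heart of the matter, which is why it was isolated beforehand, and the whole proof should conclude by assembling $\beta_{i,d_i} = {k \brack i}_q \cdot q^{i(i-1)/2}$ and noting $d_i = d\frac{q^i-1}{q^{i-1}(q-1)}$ from Corollary~\ref{cdi}.
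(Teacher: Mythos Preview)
Your overall architecture matches the paper's: reduce via Proposition~\ref{contribution}, Proposition~\ref{prop3}, and Lemma~\ref{indep} to showing that the top Betti number of a constant weight code of dimension $i$ is $q^{i(i-1)/2}$, prove this by induction on $i$, and recognize Lemma~\ref{gauss} as the arithmetic core. The base case $i=1$ is handled exactly as the paper does.

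Where you diverge is in the inductive step. You propose a Mayer--Vietoris or deletion--contraction argument on the homology of $\Delta(H_{C'})$, and you correctly flag the obstacle: one must know the homology is concentrated in a single degree before an Euler-characteristic count pins down the actual Betti number. The paper sidesteps this entirely. Since by the induction hypothesis all $\beta_j$ for $j<k$ are already known and the resolution is pure, the paper simply reads off the exactness of the minimal free resolution at the level of Hilbert polynomials: the alternating sum $\sum_{i=0}^{k}(-1)^i\beta_i$ (in the relevant degree) vanishes, so $\beta_k$ is forced by the lower $\beta_i$, and Lemma~\ref{gauss} finishes. No topological machinery, no long exact sequence, no separate concentration argument---purity has already been established, and exactness of the resolution does the rest. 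Your route could be made to work, but it is strictly harder; the paper's Hilbert-series trick is the clean way to close the induction.
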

\begin{proof}We do it recursively on the dimension $k$ of the code.  For $k=1$, all the non-zero codewords have the same support, say $a \subset \{1,...,n\}$, so that $R(C)$ is of the form $R(C)=\K[\mathbf{x}]/<\mathbf{x}^a>$ and the Stanley-Reisner resolution is \[0 \leftarrow R(C) \leftarrow S \leftarrow S(-a) \leftarrow 0.\] Suppose that we have proved our result for all constant weight codes of dimension less than $k$. In particular, for any constant weight code $C'$ of dimension $i \leqslant k$, $\beta_{i,Supp(C')}(R(C'))=q^\frac{i(i-1)}{2}.$ By Proposition~\ref{contribution} and~\cite{Johnsen1}, we know that the $\N^n$-graded Stanley-Reisner resolution of the code is \begin{eqnarray*} && 0 \leftarrow R(C) \leftarrow S \leftarrow \bigoplus_{C' \in \mathcal{C}_1(C)}S(-Supp(C'))^{\beta_{1,Supp(C')}(R(C))} \leftarrow \cdots \\&&\leftarrow \bigoplus_{C'\in \mathcal{C}_{k-1}(C)}S(-Supp(C'))^{\beta_{k-1,Supp(C')}(R(C))} \leftarrow S(-Supp(C))^{\beta_{k,Supp(C)}(R(C))} \leftarrow 0.\end{eqnarray*}By Lemma~\ref{indep} and the recursion hypothesis, we can assert that the Stanley-Reisner resolution is \begin{eqnarray*} && 0 \leftarrow R(C) \leftarrow S \leftarrow \bigoplus_{C' \in \mathcal{C}_1(C)}S(-Supp(C')) \leftarrow \cdots \leftarrow \bigoplus_{C'\in \mathcal{C}_{i}(C)}S(-Supp(C'))^{q^\frac{i(i-1)}{2}} \leftarrow\cdots \\&&\leftarrow \bigoplus_{C'\in \mathcal{C}_{k-1}(C)}S(-Supp(C'))^{q^\frac{(k-1)(k-2)}{2}} \leftarrow S(-Supp(C))^{\beta_{k,Supp(C)}(R(C))} \leftarrow 0.\end{eqnarray*}
Since there are exactly ${k \brack i}_q$ subcodes of dimension $i$, it gives that the ungraded resolution is \[ 0 \leftarrow R(C) \leftarrow S \leftarrow S^{{k \brack 1}_qq^\frac{1(1-1)}{2}} \leftarrow \cdots \leftarrow S^{{k \brack k-1}_qq^{(k-1)(k-2)}{2}} \leftarrow S^{\beta_{k,Supp(C)}} \leftarrow 0.\] 
In this resolution we study the terms of degree $\#Supp(C)$ in the Hilbert polynomials of each of the terms. The alternating sum is zero (as is the contribution from $R(C)$), so \[0 = \left(\sum_{i=0}^{k-1}(-1)^i{k \brack i}_qq^\frac{i(i-1)}{2} \right) + (-1)^k \beta_{Supp(C),k}\] and from Lemma~\ref{gauss}, this gives \[\beta_{Supp(C),k}=q^\frac{k(k-1)}{2}.\]
\qed\end{proof}

\begin{example}Take the same code as in Example~\ref{exawh}. Then the Stanley-Reisner resolution of the matroid associated to $C$ is \[0 \leftarrow R(C) \leftarrow S \leftarrow S(-9)^{13} \leftarrow S(-12)^{39} \leftarrow S(-13)^{27} \leftarrow 0.\]
\end{example}

Of course, since the $\N$-graded Stanley-Reisner resolution gives the weight hierarchy, the converse of the previous corollary is true. But there is a stronger converse:

\begin{proposition}Let $C$ be a $[n,k,d]_q$ linear code. Suppose that the Stanley-Reisner resolution of its associated matroids starts like \[0 \leftarrow R(C) \leftarrow S \leftarrow S(-d)^{{k \brack 1}_q} \leftarrow \cdots\] Then $C$ is a constant weight code of weight $d$.
\end{proposition}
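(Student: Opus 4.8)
The plan is to extract information from the first step of the resolution, namely the term $S(-d)^{{k \brack 1}_q}$ in homological degree $1$, and translate it back into a statement about the weights of $1$-dimensional subcodes. By the general theory recalled before Lemma~\ref{n=i} (from~\cite{Johnsen1}), the $\N^n$-graded Betti numbers $\beta_{1,\sigma}$ are supported exactly on the minimal elements $\sigma$ with $\#\sigma - rank(\sigma) = 1$, that is, on the circuits of the matroid $\Delta(H_C)$; and each such circuit contributes exactly once, so $\beta_1 = \#\{\textrm{circuits}\}$. The hypothesis tells us two things simultaneously: first, that $\beta_1 = {k \brack 1}_q = \frac{q^k-1}{q-1}$, and second, that the resolution is \emph{linear in degree }$d$ at this step, i.e. every circuit has cardinality exactly $d$. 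So the first task is: \emph{every circuit of the matroid has size }$d$.

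Next I would relate circuits to $1$-dimensional subcodes. As noted in the proof of Lemma~\ref{n=i}, every circuit of $\Delta(H_C)$ is the support of a codeword, and distinct circuits have distinct supports; moreover every nonzero codeword's support \emph{contains} a circuit. The number of $1$-dimensional subcodes of $C$ is exactly ${k \brack 1}_q = \frac{q^k-1}{q-1}$ (the count recalled before Lemma~\ref{gauss}, with $i=1$), and each $1$-dimensional subcode $\langle c\rangle$ has a support $Supp(c)$ which contains a circuit. Now consider the map sending a $1$-dimensional subcode to (one of) the circuit(s) inside its support. Since there are ${k \brack 1}_q$ subcodes and, by the hypothesis, exactly ${k \brack 1}_q$ circuits, if I can show this assignment is injective it must be a bijection, and then every $1$-dimensional subcode's support \emph{equals} a circuit, hence has cardinality $d$. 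Injectivity is easy: if two $1$-dimensional subcodes $\langle c\rangle \neq \langle c'\rangle$ had supports containing the \emph{same} circuit $\tau$, then since $\tau = Supp(c'')$ for a codeword $c''$ with $\langle c''\rangle \subseteq \langle c\rangle$ forces $\langle c''\rangle = \langle c\rangle$ (a $1$-dimensional code has only one $1$-dimensional subcode, itself), and likewise $\langle c''\rangle = \langle c'\rangle$, contradiction. So $w(D) = d$ for every $D \in \mathcal{C}_1(C)$, and in particular $d_1 = d$.

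Having established that all $1$-dimensional subcodes have the same weight $d$, I would then invoke Theorem~\ref{tdi} with $s=1$: it gives $w(D_t) = d_t = d\frac{q^k - q^{k-t}}{q^k - q}$ for every subcode $D_t$ of dimension $t$, which in particular means all nonzero codewords (all of dimension $1$) have weight $d$ — i.e. $C$ is a constant weight code of weight $d$. That finishes the argument.

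The one point that needs care — and the place I'd expect the real content to sit — is the passage from "$\beta_{1,\sigma} = 0$ unless $\#\sigma = d$" to "every circuit has size $d$". This is where the \emph{linearity in degree }$d$ of the first step (not merely the numerical value of $\beta_1$) is used: the term is $S(-d)^{{k \brack 1}_q}$, a sum of copies of $S$ twisted \emph{uniformly} by $d$, so no circuit of size $\neq d$ can occur. One should also double-check the edge direction of the circuit/support/codeword correspondence, since for general (non-constant-weight) codes a codeword support need not be a circuit (Example~\ref{1-1}); but here we only need that every codeword support \emph{contains} a circuit and that the circuit count matches ${k \brack 1}_q$, which the hypothesis hands us, so the combinatorial squeeze goes through without assuming constant weight in advance.
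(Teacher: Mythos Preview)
Your overall counting strategy is the same as the paper's: the first step of the resolution says there are exactly ${k \brack 1}_q$ circuits, all of size $d$; comparing this with the ${k \brack 1}_q$ one-dimensional subcodes forces every such subcode to have weight $d$. But your injectivity argument has a genuine gap. You assert that if $\tau = Supp(c'')$ is a circuit contained in $Supp(c)$, then $\langle c''\rangle \subseteq \langle c\rangle$. Why? The only way to deduce $c'' \in \langle c\rangle$ from $Supp(c'') \subset Supp(\langle c\rangle)$ is via Lemma~\ref{lemma5}, which \emph{requires} $C$ to be constant weight --- exactly what you are trying to prove. Concretely, nothing yet prevents a codeword $c$ of weight $>d$ whose support properly contains a circuit $\tau$; then $\langle c\rangle$ and $\langle c''\rangle$ (with $Supp(c'')=\tau$) are distinct $1$-dimensional subcodes that your map sends to the same $\tau$, so injectivity fails.

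The fix is to reverse the direction of the map, as the paper does: send each circuit $\tau$ to the $1$-dimensional subcode $\langle c''\rangle$ with $Supp(c'') = \tau$. This is well-defined (two codewords with support exactly the circuit $\tau$ differ by a scalar, else a suitable combination would have strictly smaller nonempty support inside $\tau$, contradicting minimality) and clearly injective (distinct circuits are distinct supports). You then get ${k \brack 1}_q$ distinct $1$-dimensional subcodes of weight $d$, which exhausts $\mathcal{C}_1(C)$. Equivalently, your own map is \emph{surjective} (for any circuit $\tau$, the subcode $\langle c''\rangle$ with $Supp(c'')=\tau$ maps to $\tau$, since $\tau$ contains no proper circuit), and surjectivity between finite sets of equal cardinality already gives bijectivity --- you just argued the wrong half. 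Finally, once every $1$-dimensional subcode has weight $d$ the code is constant weight by definition, so the closing appeal to Theorem~\ref{tdi} is unnecessary.
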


\begin{proof} Since in homology degree $1$, the contribution of any subset $\sigma$ of the matroids ground set is $1$ if $\sigma$ is a circuit, and $0$ otherwise, the start of the resolution tells us that there are exactly ${k \brack 1}_q$ circuits of weight $d$. We know that any circuit of the matroid corresponds to a vector space generated by a codeword. So this tells us that there are at least ${k \brack 1}_q$ subspaces generated by a single codeword. But there are ${k \brack 1}_q$
 subspaces of dimension $1$, which means that all the subspaces of dimension $1$ are generated by a codeword of weight $d$.\qed
\end{proof}

We know the $\N$-graded Stanley-Reisner resolution of a constant weight linear code. As such, we also know the ungraded Stanley-Reisner resolution (just remove the twists since this is a pure resolution). A natural question would be to determine whether a code with such a ungraded Stanley-Reisner resolution is constant weight. The answer is no, as the following example shows.

\begin{example} Let $C$ be the code of Example~\ref{exa5}. Its ungraded Stanley-Reisner resolution is the same as the ungraded Stanley-Reisner resolution associated to the $[4,2,2]_2$ constant weight code defined by the generator matrix \[\begin{bmatrix} 1&0&1&0 \\ 0&1&1&0\end{bmatrix}\]
\end{example}

\subsubsection*{Acknowledgements} We thank the anonymous referees for helpful remarks.

%
%


\begin{thebibliography}{1}

\bibitem{Bjorner1}Bj\"orner, A.: \textit{The homology and shellability of matroids and geometric lattices}. In: Matroid Applications. Encyclopedia of Mathematics Application, vol. 40, pp 226--283. Cambridge University Press, Cambridge (1992).

\bibitem{Eagon1}Eagon, J.A., Reiner, V.: \textit{Resolutions of Stanley-Reisner rings and Alexander duality}. J. Pure Appl. Algebra \textbf{130}(3), 265--275 (1998).

\bibitem{Helleseth1}Helleseth, T., Kl\o ve, T., Mykkeltveit, J.: \textit{The weight distribution of irreducible cyclic codes with block lengths $n_1((q^l-1)/N)$}. Discr. Math. \textbf{18}, 179--211 (1977).


\bibitem{Johnsen1} Johnsen, T., Verdure, H.: \textit{Hamming weights and Betti numbers of Stanley-Reisner rings associated to matroids}. Appl. Algebra Eng. Comm. Comput., http://link.springer.com/article/10.1007/s00200-012-0183-7, to appear. arXiv:1108.3172 (2011)

\bibitem{Konvalina1} Konvalina, J.:  \textit{A unified interpretation of the binomial coefficients, the Stirling numbers, and the Gaussian coefficients}. Am. Math. Month. \textbf{107}(10), 901--910 (2000). 

\bibitem{Liu1} Liu, Z., Chen, W.: \textit{Notes on the value function}, Des. Codes Cryptogr. \textbf{54}, 11--19 (2010).

\bibitem{Martin1} Martin, J.: \textit{Matroids, demi-matroids and chains of linear codes}, Master thesis. http://hdl.handle.net/10037/2957 (2010).

\bibitem{1Miller} Miller, E., Sturmfels, B.: \textit{Combinatorial Commutative Algebra}, GTM 227. Springer, New York (2005).

\bibitem{1Oxley} Oxley, J.G.: \textit{Matroid theory}, Oxford University Press, Oxford (1992).

\end{thebibliography}
\end{document}